\newtheorem{theorem}{Theorem}
\newtheorem{proposition}{Proposition}
\newtheorem{lemma}{Lemma}
\newtheorem{corollary}{Corollary}
\newtheorem{remark}{Remark}
\theoremstyle{definition}
\begin{document}
	\newgeometry{left=14mm,right=14mm,top=30mm,bottom=30mm}
	\title{On Information Theoretic Fairness With A Bounded Point-Wise Statistical Parity Constraint: An Information Geometric Approach}
\author{
	\IEEEauthorblockN{Amirreza Zamani$^\dagger$, Ayfer \"{O}zg\"{u}r$^\ddagger$, Mikael Skoglund$^\dagger$ 
		\IEEEauthorblockA{\\
			$^\dagger$Division of Information Science and Engineering, KTH Royal Institute of Technology \\
			$^\ddagger$Department of Electrical Engineering, Stanford\\
			Email: \protect amizam@kth.se,  aozgur@stanford.edu, skoglund@kth.se }}
}
	\maketitle

\begin{abstract}
In this paper, we study an information-theoretic problem of designing a fair representation under a bounded point-wise statistical (demographic) parity constraint.
	More specifically, an agent uses some useful data (database) $X$ to solve a task $T$. Since both $X$ and $T$ are correlated with some latent sensitive attribute or secret $S$, the agent designs a representation $Y$ that satisfies a bounded point-wise statistical parity, that is, such that for all realizations of the representation $y\in\cal Y$, we have $\chi^2(P_{S|y};P_S)\leq \epsilon$. In contrast to our previous work, here we use the point-wise measure instead of a bounded mutual information, and we assume that the agent has no direct access to $S$ and $T$; hence, the Markov chains $S - X - Y$ and $T - X - Y$ hold. In this work, we design $Y$ that maximizes the mutual information $I(Y;T)$ about the task while satisfying a bounded compression rate constraint, that is, ensuring that $I(Y;X) \leq r$. Finally, $Y$ satisfies the point-wise bounded statistical parity constraint $\chi^2(P_{S|y};P_S)\leq \epsilon$. 
    
	When $\epsilon$ is small, concepts from information geometry allow us to locally approximate the KL-divergence and mutual information. To design the representation $Y$, we utilize this approximation and show that the main complex fairness design problem can be rewritten as a quadratic optimization problem that has simple closed-form solution under certain constraints. For the cases where the closed-form solution is not obtained we obtain lower bounds with low computational complexity. Here, we provide simple fairness designs with low complexity which are based on finding the maximum singular value and singular vector of a matrix.
	Finally, in a numerical example we compare our obtained results with the optimal solution. %
\end{abstract}
\section{Introduction}
\begin{figure}[t]
	\centering
	\includegraphics[width = 0.45\textwidth]{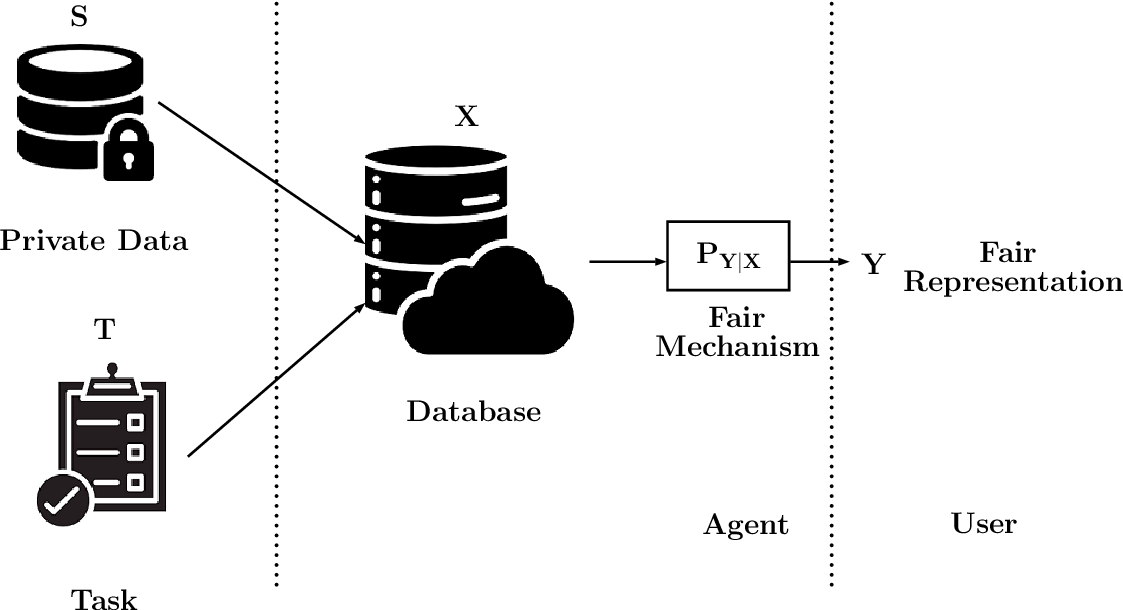}
	\caption{Fair mechanism design under bounded point-wise statistical parity. The goal is to design $Y$ using database $X$ that is useful for the task $T$, is compressed, and leaks within a controlled threshold of the latent sensitive attribute $S$. Here, the agent has no direct access to $S$ and $T$.}
	\label{IZS}
\end{figure}
As shown in Fig.~\ref{IZS}, in this paper, an agent wants to
use some observable useful data (database) $X$ to draw
inferences or make some decision about a task $T$. Here, we
assume that both the useful data and the task are correlated
with some latent sensitive attribute or secret $S$. For example,
the data $X$ could represent a person’s resume, the task $T$
could determine whether this person should be employed
in a position or not, and the sensitive attribute $S$ could
correspond to the person’s gender or ethnicity.

It is important to ensure that decisions are not unfair and that inferences do not lead to privacy breaches \cite{AmirITW2024,vari}. To address this, one possible solution is to design a \emph{private} or \emph{fair} representation $Y$ of the database $X$. This representation should preserve as much information as possible about the task $T$ while satisfying fairness or privacy criteria with respect to the latent sensitive attribute or secret
\cite{AmirITW2024,gun, vari,zhao2022, zhao2019,zemel, gronowski2023classification,hardt, king3, borz, khodam, Khodam22,Yanina1,makhdoumi, yamamoto1988rate, sankar, Calmon2,fairAsoodeh,9518124,isik2023exact,liu2024universal}. As outlined in~\cite{vari,AmirITW2024,fairAsoodeh}, privacy and fairness problems are very similar, and their intersections can be utilized in broad areas such as classification algorithms. More specifically, in \cite{AmirITW2024}, the design of representations $Y$ that contain no information about the sensitive attribute $S$ yielding $I(Y;S) = 0$, has been considered. 
In the \emph{information-theoretic privacy} literature, the independence of disclosed data $Y$ and private attribute or secret $S$ is referred to as \emph{perfect privacy} or \emph{perfect secrecy}~\cite{borz, Yanina1}. In the \emph{algorithmic fairness} literature, the independence of an algorithm’s output (representation of the database) $Y$ from the sensitive attribute $S$ is known as \emph{perfect demographic (or statistical) parity}~\cite{AmirITW2024,vari,zhao2022,zhao2019,zemel}.
As discussed in \cite{11123370,shannon,khodam,Yanina1,fairAsoodeh,king3}, there are many cases where perfect privacy (secrecy) or perfect demographic (statistical) parity is not attainable (for example, \cite[Example 1]{11123370}), hence, in \cite{11123370}, a bounded statistical parity criteria is used, that is $I(S;Y)\leq \epsilon$. In this work, instead of bounded mutual information, we use bounded point-wise statistical parity, as in \cite{khodam} for a privacy design problem. More specifically, we use $\chi^2(P_{S|y};P_S)\leq \epsilon$ as the fairness criterion which must hold for each $y\in \cal Y$. 
Using a point-wise measure can be motivated by the definition of bounded statistical (demographic) parity, which is based on the difference between the conditional distributions of the outcome given the sensitive information \cite{feldman2015certifying,fairAsoodeh,5360534}.
In contrast to \cite{AmirITW2024} and \cite{11123370}, here we assume that the latent sensitive attribute or secret $S$ and the task $T$ are not directly accessible to the agent, as shown in Fig.~\ref{IZS}. In other words, here, we assume that the Markov chains $S-X-Y$ and $T-X-Y$ hold.

Many information theory problems face challenges due to the lack of a geometric structure in the space of probability distributions. Assuming the distributions of interest are close, we can approximate the KL-divergence as well as the mutual information by a weighted squared Euclidean distance, leading to a method where the main complex problems can be approximated. This approach has been used in \cite{Shashi, huang} for point-to-point channels and certain broadcast channels. Furthermore, similar techniques have been used in the privacy setting in \cite{khodam, Khodam22, king3, 9965856, emma,8125176,zamani2025information}. 
Under the outlined fairness requirements of the representations, we consider a problem that solves a trade-off between utility, compression rate, and bounded point-wise statistical parity:
    To ensure that the representation $Y$ has as much impact as possible, we maximize the information it contains about the task $T$. Moreover, we impose a minimum level of compression $r$ to the output representation, that is, $I(Y;X) \leq r$. 
\textbf{Prior works on fairness mechanism design:}
The concept of \emph{fair representations} was introduced in \cite{zemel}, marking a significant advancement in the field of algorithmic fairness through the capabilities of deep learning. Subsequent research has been predominantly characterized by adversarial~\cite{zemel, edwards2016censoring, zhao2019} and variational~\cite{vari, creager2019flexibly, louizos2015variational, gupta2021controllable} methodologies. The theoretical exploration of the trade-offs between utility and fairness is further studied in~\cite{zhao2022}.
In \cite{vari}, the authors study an information-theoretic fairness problem. They introduce the CFB (Conditional Fairness Bottleneck), which quantifies the trades-off between the utility, fairness, and \emph{compression} of the representations in terms of mutual information. This additional criterion has since been used in subsequent studies~\cite{gupta2021controllable, de2022funck, gronowski2023classification}.~\cite{gupta2021controllable, de2022funck, gronowski2023classification}. More specifically, in \cite{vari}, the representation design is based on a variational approach. In contrast with \cite{vari}, in \cite{AmirITW2024}, the authors propose a simple and constructive theoretical approach to design fair representations with perfect demographic (statistical) parity. They obtain upper and lower bounds on a trade-off between the utility, fairness, and compression rate of the representation. To achieve the lower bounds, they use randomization techniques and extended versions of \emph{Functional Representation Lemma} and \emph{Strong Functional Representation Lemma} derived in \cite{king3}. 
In \cite{11123370}, the problem in \cite{AmirITW2024} has been extended by replacing zero demographic parity by using a bounded mutual information as the fairness constraint. 
In \cite{fairAsoodeh}, a binary classification problem subject to both differential privacy and fairness constraints is studied. In \cite{9517766}, the role of data processing in the fairness–accuracy trade-off, using equalized odds as the fairness criterion, has been studied.\\
\textbf{Prior works on privacy mechanism design:}
In \cite{Yanina1}, the authors introduce the notion of \emph{secrecy by design} and apply it to privacy mechanism design and lossless compression problems. They find bounds on the trade-off between perfect privacy and utility using the \emph{Functional Representation Lemma} (FRL). In \cite{king3}, these results are generalized to an arbitrary privacy leakage. To attain privacy mechanism, extensions of the FRL and SFRL have been used. In \cite{zamani2025variable}, a compression design problem under the secrecy constraints is studied and bounds on the average length of the encoded message are derived. 
A similar approach to that of the present paper was used in \cite{zamani2025information}, where an information-geometric perspective on Local Information Privacy (LIP), Max-lift, and Local Differential Privacy (LDP) was considered.
In \cite{borz}, the authors study the privacy-utility trade-off under perfect privacy condition. In \cite{khodam, Khodam22}, this is generalized by relaxing the perfect privacy constraint and allowing some small bounded leakage. In both \cite{khodam} and \cite{Khodam22}, point-wise (per-letter) measures have been used in the leakage constraint. Furthermore, the privacy design is based on the information geometry concept which allows them to study the problem geometrically.
The notion of the \emph{Privacy Funnel} is introduced in~\cite{makhdoumi}, where the privacy-utility trade-off considers the log-loss as the privacy and distortion metrics. 
In \cite{Calmon2}, the authors studied the fundamental limits of the privacy and utility trade-off under an estimation-theoretic approach. In \cite{yamamoto, sankar}, the trade-off between privacy and utility is studied considering the equivocation as the privacy measure and the expected distortion as a utility.\\
\textbf{Contributions:}
In this paper, we propose a simple and constructive theoretical approach to design fair representations under bounded point-wise statistical parity.
We study the problem of finding a trade-off between utility, fairness, and compression rate, which is described in~\eqref{problem}. As outlined earlier, we use a point-wise measure for the fairness constraint. We show that when fairness threshold is small, the main optimization problem can be approximated by a quadratic one. This leads to a simple design based on finding the maximum singular value and the corresponding singular vector of a matrix. We compare the obtained results with the optimal solution in a numerical example. \vspace{-1mm}
\section{System model and Problem Formulation} \label{sec:system}
In this section, we introduce the problem of designing a compressed representation of data under bounded point-wise fairness constraint. Let the latent sensitive attribute $S$, database (useful data) $X$, and task $T$ be discrete random variables (RVs) defined on alphabets $\mathcal{S}$, $\mathcal{X}$, and $\mathcal{T}$, of finite cardinalities $\left| \mathcal{S} \right|
$, $\left| \mathcal{X} \right|$, and $\left| \mathcal{T} \right|
$, respectively. The marginal probability distributions of $S$, $X$, and $T$ are denoted by $P_S \in \mathbb{R}^{|\mathcal{S}|}$, $P_X \in \mathbb{R}^{|\mathcal{X}|}$, and $P_T \in \mathbb{R}^{|\mathcal{T}|}$, respectively. Furthermore, $P_{S,X,T} \in \mathbb{R}^{|\mathcal{S}|\times |\mathcal{X}| \times |\mathcal{T}|}$ denotes the joint distribution of $S$, $X$ and $T$. 
Let the matrices $P_{S|X}\in \mathbb{R}^{|\mathcal{S}\times|\mathcal{X}|}$ and $P_{T|X}\in \mathbb{R}^{|\mathcal{T}\times|\mathcal{X}|}$ with elements $P_{S=s|X=x}$ and $P_{T=t|X=x}$ denote the conditional distributions of $S$ given $X$ and $T$ given $X$.
Here, for simplicity of design, we assume that $P_{S|X}$ is invertible, implying $|\mathcal{X}| = |\mathcal{S}|$. However, we will later discuss how this assumption can be generalized.
Let $\sqrt{P_S} \in \mathbb{R}^{|\mathcal{S}|}$ denote a vector with entries $\sqrt{P_{S}(s)}$, and let $[\sqrt{P_S}] \in \mathbb{R}^{|\mathcal{S}|\times|\mathcal{S}|}$ denote a diagonal matrix with diagonal elements $\sqrt{P_S(s)}$. We use the same notation for $[\sqrt{P_{T}}]$ and $[\sqrt{P_X}]$.
Let the discrete RV $Y \in \mathcal{Y}$ denote the fair representation. We emphasize that here we assume that the Markov chains $S-X-Y$ and $T-X-Y$ hold, as the agent has no direct access to $S$ and $T$.  
The goal is to design a mapping $P_{Y|X} \in \mathbb{R}^{|\mathcal{Y}| \times |\mathcal{X}|}$ that maximizes the information it keeps about the task $T$, while maintaining a minimum level of compression $r$ and satisfying the bounded point-wise statistical parity, that is $I(Y;X)\leq r$ and for all $y\in\cal Y,$ we have $\chi^2(P_{S|y};P_S)\leq \epsilon$. More specifically, the representation $Y$ should satisfy the fairness constraint defined as
\begin{align}
    \chi^2(P_{S|y};P_S)=\|\frac{P_{S|y}(s)-P_{S}(s)}{P_S(s)}\|^2=\sum_s\left(\frac{P_{S|y}(s)-P_{S}(s)}{P_S(s)}\right)^2\leq \epsilon^2,\ \forall y,\label{local1}
\end{align}
where $\|\cdot\|$ corresponds to the Euclidean norm. 
Intuitively, for small $\epsilon$, \eqref{local1} means that the two distributions (vectors) $P_{S|Y=y}$ and $P_S$ are close to each other.
The closeness of $P_{S|Y=y}$ and $P_S$ allows us to locally approximate $I(T;Y)$ and $I(X;Y)$ which leads to an approximation of \eqref{problem}. For more details on the point-wise measure, see \cite{khodam,Khodam22}.
\begin{remark}
In \cite{9965856}, we used bounded mutual information as the statistical parity constraint. Furthermore,
    bounded statistical parity has been used in \cite{feldman2015certifying,fairAsoodeh,5360534}, where we have 
    \begin{align}\label{r}
        |P(Y=y|S=\!s_1)-P(Y=y|S=s_2)|\leq\epsilon, \forall s_1,s_2 \in \mathcal{S}.
    \end{align}
    Here, we use a similar but slightly different measure as defined in \eqref{local1}. Using a point-wise measure can be motivated by \eqref{r}. The relation between \eqref{local1} and bounded mutual information has been studied in \cite[Remark 3]{khodam}. 
\end{remark}
Hence, the fair representation design problem can be stated as follows:
\begin{align}
    g^{r}_{\epsilon,\chi^2}(P_{S,X,T})
    &\triangleq\sup_{\begin{array}{c} 
	\substack{P_{Y|X}:S-X-Y,\ T-X-Y,\\ \chi^2(P_{S|y};P_S)\leq \epsilon^2,\ \forall y,\\ I(X;Y)\leq r }
	\end{array}}I(Y;T).
    \label{problem}
\end{align} 
The constraint $\chi^2(P_{S|y};P_S)\leq \epsilon^2,\ \forall y$, corresponds to the bounded point-wise statistical parity, and $I(X;Y)\leq r$ represents the compression rate constraint.
\begin{remark}
We define the utility–fairness–compression trade-off, subject to bounded mutual information and compression constraints, as follows.
\begin{align}g^{r}_{\epsilon}(P_{S,X,T})
    &\triangleq\sup_{\begin{array}{c} 
	\substack{P_{Y|X}:S-X-Y,\ T-X-Y,\\I(Y;S)\leq \epsilon^2,\\ I(X;Y)\leq r }
	\end{array}}I(Y;T). \label{prob2}\end{align}
    Then, by using \cite[Remark 3]{khodam}, we have $g^{r}_{\epsilon,\chi^2}(P_{S,X,T})\leq g^{r}_{\epsilon}(P_{S,X,T})$. In other words, one benefit of solving \eqref{problem} is to find a lower bound on \eqref{prob2} which is hard to solve. The trade-off in \eqref{prob2} is equivalent to the privacy-utility trade-off with a rate constraint studied in \cite{gun} and it has been shown that for sufficiently large $r$ ($r \geq H(X)$), the problem leads to a linear program.
    Finally, if we remove the Markov chains $S-X-Y$ and $T-X-Y$ in \eqref{problem}, we get the problem studied in \cite{9965856}. 
\end{remark}
\begin{remark}
	In this work, we assume that $P_{S|X}$ is invertible; however, this assumption can be generalized using the techniques in \cite{Khodam22}. Here, we focus on invertible case for design simplicity and leave the generalization to future work.
\end{remark}

 \section{Main Results}\label{sec:resul}
In this section, we first derive an approximation to \eqref{problem}. We then find lower bounds for this approximation and discuss their tightness.
To this end,
using \eqref{local1}, we can rewrite the conditional distribution $P_{S|Y=y}$ as a perturbation of $P_S$. Hence, for any $y\in\mathcal{Y}$, we can write $P_{S|Y=y}=P_S+\epsilon\cdot J_{y}$, where $J_y\in\mathbb{R}^\mathcal{S}$ is a perturbation vector that has the following three properties
		\begin{align}
	&\sum_s J_{y}(s)=0,\ \ \forall y,\label{0}	 \\
	&\sum_y P_{Y}(y)J_{y}(s)=0, \ \forall s,\label{sum}	\\
	&\|[\sqrt{P_S}^{-1}]J_y\|^2=\sum_x \frac{J_{y}(s)^2}{P_S(s)}\leq 1 \label{1}.
	\end{align} 
    The first two properties ensure that $P_{S|Y=y}$ is a valid probability distribution \cite{khodam,Khodam22}, and the third property follows from \eqref{local1}.
Furthermore, \eqref{sum} implies $\sum_u P_Y(y)J_y=\bm{0}\in\mathbb{R}^{|\mathcal{S}|}$. 
Next, we approximate $I(Y;X)$ and $I(Y;T)$ using the Markov chains and the perturbation vector $J_y$.
In the next results, we use the Bachmann-Landau notation where $o(\epsilon)$ describes the asymptotic behavior of a function $f:\mathbb{R}^+\rightarrow\mathbb{R}$ which satisfies $\frac{f(\epsilon)}{\epsilon}\rightarrow 0$ as $\epsilon\rightarrow 0$.
\begin{lemma}\label{lem1}
	For all $\epsilon<\frac{|\sigma_{\text{min}}(P_{S|X})|\min_{x\in\mathcal{X}}P_X(x)}{\sqrt{\max_{s\in{\mathcal{S}}}P_S(s)}}$, we have
	\begin{align}
I(X;Y)&=\frac{1}{2}\epsilon^2\sum_y\! P_y\|[\sqrt{P_X}^{-1}]P_{S|X}^{-1}J_y\|^2+o(\epsilon^2)\nonumber\\&\simeq \frac{1}{2}\epsilon^2\sum_y P_y\|[\sqrt{P_X}^{-1}]P_{S|X}^{-1}J_y\|^2,\label{appxy}
\end{align}
and for all $\epsilon<\frac{\min_{t\in\mathcal{T}}P_T(t)}{|\sigma_{\text{max}}(P_{T|X}P_{S|X}^{-1})|\sqrt{\max_{s\in{\mathcal{X}}}P_S(s)}}$, 
    we have
\begin{align}
I(T;Y)&=\frac{1}{2}\epsilon^2\sum_y P_y\|[\sqrt{P_T}^{-1}]P_{T|X}P_{S|X}^{-1}J_y\|^2+o(\epsilon^2)\nonumber\\&\simeq \frac{1}{2}\epsilon^2\sum_y P_y\|[\sqrt{P_T}^{-1}]P_{T|X}P_{S|X}^{-1}J_y\|^2.\label{appty}
	\end{align}
\end{lemma}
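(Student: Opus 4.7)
The plan is to reduce both approximations to a standard local KL-divergence expansion by first rewriting $P_{X|Y=y}$ and $P_{T|Y=y}$ as small perturbations of $P_X$ and $P_T$, respectively. Exploiting the Markov chain $S-X-Y$ and the invertibility of $P_{S|X}$, I would start from $P_{S|X}P_{X|Y}=P_{S|Y}$ to write
\begin{align*}
P_{X|Y=y} \;=\; P_{S|X}^{-1}P_{S|Y=y} \;=\; P_{S|X}^{-1}\bigl(P_S+\epsilon J_y\bigr) \;=\; P_X+\epsilon\,P_{S|X}^{-1}J_y,
\end{align*}
using the identity $P_{S|X}^{-1}P_S=P_X$ which follows from $P_S=P_{S|X}P_X$. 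Defining $K_y\triangleq P_{S|X}^{-1}J_y$, I would similarly use $T-X-Y$ (so that $P_{T|Y}=P_{T|X}P_{X|Y}$) to obtain
\begin{align*}
P_{T|Y=y} \;=\; P_T+\epsilon\,P_{T|X}P_{S|X}^{-1}J_y,
\end{align*}
and set $M_y\triangleq P_{T|X}P_{S|X}^{-1}J_y$. Thus the problem is reduced to expanding two mutual informations of the classic perturbation form.

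Next, I would verify that for the stated ranges of $\epsilon$, the vectors $P_X+\epsilon K_y$ and $P_T+\epsilon M_y$ are valid probability distributions, i.e., have nonnegative entries (the simplex sum-to-one constraint follows because $P_{S|X}^{-1}J_y$ and $P_{T|X}P_{S|X}^{-1}J_y$ inherit the zero-sum property from $J_y$ via row-stochasticity of $P_{S|X}$ and $P_{T|X}$). The infinity norm of $K_y$ is bounded by its Euclidean norm, which in turn satisfies $\|K_y\|_2\le \|P_{S|X}^{-1}\|_2\,\|J_y\|_2\le \sigma_{\min}(P_{S|X})^{-1}\sqrt{\max_s P_S(s)}$ by constraint \eqref{1}; the bound $\epsilon<\frac{|\sigma_{\min}(P_{S|X})|\min_x P_X(x)}{\sqrt{\max_s P_S(s)}}$ is then exactly what is needed to keep every entry of $P_X+\epsilon K_y$ nonnegative. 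The second threshold arises analogously, replacing $P_{S|X}^{-1}$ by $P_{T|X}P_{S|X}^{-1}$ and $P_X$ by $P_T$, which explains the appearance of $\sigma_{\max}(P_{T|X}P_{S|X}^{-1})$ and $\min_t P_T(t)$.

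Finally, I would apply the standard second-order local expansion of KL-divergence: for any perturbation $Q=P+\epsilon J$ with $\sum J=0$ and sufficiently small $\epsilon$,
\begin{align*}
D(Q\|P) \;=\; \tfrac{1}{2}\epsilon^2\sum_x \frac{J(x)^2}{P(x)} + o(\epsilon^2) \;=\; \tfrac{1}{2}\epsilon^2\bigl\|[\sqrt{P}^{\,-1}]J\bigr\|^2 + o(\epsilon^2),
\end{align*}
obtained by Taylor expanding $\log(1+\epsilon J(x)/P(x))$ to second order (the validity ranges above guarantee the argument of the logarithm is bounded away from $0$). Writing $I(X;Y)=\sum_y P_Y(y)\,D(P_{X|Y=y}\|P_X)$ and substituting the expansion with $J\leftarrow K_y=P_{S|X}^{-1}J_y$ yields \eqref{appxy}; the same computation with $J\leftarrow M_y=P_{T|X}P_{S|X}^{-1}J_y$ yields \eqref{appty}.

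The only nontrivial step I expect is the careful bookkeeping in the validity check: one has to be a bit careful that the bound on $\|K_y\|_\infty$ (and $\|M_y\|_\infty$) derived via the spectral norm is tight enough that $\epsilon<\frac{|\sigma_{\min}(P_{S|X})|\min_x P_X(x)}{\sqrt{\max_s P_S(s)}}$ actually suffices to maintain nonnegativity uniformly over $y$ and $x$, and that the uniform bound also makes the $o(\epsilon^2)$ remainder in the Taylor expansion uniform over $y$ so that summing against $P_Y(y)$ preserves the $o(\epsilon^2)$ order. Both of these are routine but merit an explicit line in the proof.
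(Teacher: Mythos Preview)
Your proposal is correct and follows essentially the same approach as the paper: both use the Markov chains $S-X-Y$ and $T-X-Y$ together with invertibility of $P_{S|X}$ to write $P_{X|Y=y}=P_X+\epsilon\,P_{S|X}^{-1}J_y$ and $P_{T|Y=y}=P_T+\epsilon\,P_{T|X}P_{S|X}^{-1}J_y$, then apply the second-order Taylor expansion of $\log(1+\cdot)$ to $D(P_{\cdot|Y=y}\|P_\cdot)$, and justify the stated $\epsilon$-thresholds via the spectral-norm bound combined with $\|J_y\|^2\le\max_s P_S(s)$ from \eqref{1}. Your additional remarks on the zero-sum property of the perturbations and the uniformity of the $o(\epsilon^2)$ remainder over $y$ are correct refinements that the paper leaves implicit.
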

\begin{proof}
To approximate $I(X;Y)$, using the Markov chain $S-X-Y$ and invertible $P_{S|X}$ we can rewrite $P_{X|Y=y}$ as perturbations of $P_X$ as follows
	\begin{align*}
	P_{X|Y=y}&=P_{S|X}^{-1}[P_{S|Y=y}-P_S]+P_X\\&=\epsilon\cdot P_{S|X}^{-1}J_y+P_X.
	\end{align*}
    Then, by using local approximation of the KL-divergence which is based on the second order Taylor expansion of $\log(1+x)$ we get
    \begin{align*}
	I(Y;X)&=\sum_y P_Y(y)D(P_{X|Y=y}||P_X)\\&\!\!\!\!\!\!\!\!\!\!\!\!\!\!=\sum_y P_Y(y)\sum_x\! P_{X|Y=y}(x)\log\left(\!1\!+\!\epsilon\frac{P_{S|X}^{-1}J_y(x)}{P_X(x)}\right)\\&\!\!\!\!\!\!\!\!\!\!\!\!\!\!=\frac{1}{2}\epsilon^2\sum_y P_Y(y)\sum_x
	\frac{(P_{S|X}^{-1}J_y(x))^2}{P_X(x)}+o(\epsilon^2).
	\end{align*}
    For more detail see \cite[Proposition 3]{khodam}.
    Finally, for approximating $I(X;Y)$ we must have $|\epsilon\frac{P_{S|X}^{-1}J_y(x)}{P_x(x)}|<1$ for all $x$ and $y$. A sufficient condition for $\epsilon$ to satisfy this inequality is to have $\epsilon<\frac{|\sigma_{\text{min}}(P_{S|X})|\min_{x\in\mathcal{X}}P_X(x)}{\sqrt{\max_{s\in{\mathcal{S}}}P_S(s)}}$, since in this case we have
    \begin{align*}
&\!\!\!\!\!\!\epsilon^2|P_{S|X}^{-1}J_y(x)|^2\leq\epsilon^2\left\lVert P_{S|X}^{-1}J_y\right\rVert^2\\&\!\!\!\!\!\!\leq\epsilon^2 \sigma_{\max}^2\left(P_{S|X}^{-1}\right)\left\lVert J_y\right\rVert^2\\&\!\!\!\!\!\!\stackrel{(a)}{\leq}\epsilon^2\max_{s\in{\mathcal{S}}}P_S(s)\frac{1}{\sigma^2_{\text{min}}(P_{S|X})}\\&<\min_{x\in\mathcal{X}} P_X^2(x).
\end{align*}
which implies $$|\epsilon\frac{P_{S|X}^{-1}J_y(x)}{P_X(x)}|<1.$$ The step (a) follows from $\sigma_{\max}^2\left(P_{S|X}^{-1}\right)=\frac{1}{\sigma_{\min}^2\left(P_{S|X}\right)}$ and $\|J_y\|^2\leq\max_{s\in{\mathcal{S}}}P_S(s)$. The latter inequality follows from \eqref{1} since we have
\begin{align*}
\frac{\|J_y\|^2}{\max_{s\in{\mathcal{S}}}P_S(s)}&\leq \sum_{s\in\mathcal{S}}\frac{J_y^2(s)}{P_S(s)}\\&\leq 1.
\end{align*}
	To approximate $I(T;Y)$, using the Markov chain $T-X-Y$ we can rewrite $P_{T|Y=y}$ as perturbations of $P_T$ as follows
	\begin{align*}
	P_{T|Y=y}&=P_{T|X}P_{X|y}=\epsilon\cdot P_{T|X}P_{S|X}^{-1}J_y+P_T.
	\end{align*}
	Then, by using the local approximation of the KL-divergence we have
	\begin{align*}
	I(Y;T)&=\sum_y P_Y(y)D(P_{T|Y=y}||P_T)\\&=\sum_y P_Y(y)\sum_t\! P_{T|Y=y}(t)\log\left(1+\epsilon\frac{P_{T|X}P_{S|X}^{-1}J_y(t)}{P_T(t)}\right)\\&=\frac{1}{2}\epsilon^2\sum_y P_Y(y)\sum_t
	\frac{(P_{T|X}P_{S|X}^{-1}J_y)^2}{P_T}+o(\epsilon^2).
	\end{align*}
	 Due to the Taylor expansion, we must have $|\epsilon\frac{P_{T|X}P_{S|X}^{-1}J_y(t)}{P_T(t)}|<1$ for all $t$ and $y$. A sufficient condition for $\epsilon$ to satisfy this inequality is to have $\epsilon<\frac{\min_{t\in\mathcal{T}}P_T(t)}{|\sigma_{\text{max}}(P_{T|X}P_{S|X}^{-1})|\sqrt{\max_{s\in{\mathcal{X}}}P_S(s)}}$, since in this case we have
\begin{align*}
&\epsilon^2|P_{T|X}P_{S|X}^{-1}J_y(t)|^2\leq\epsilon^2\left\lVert P_{T|X}P_{S|X}^{-1}J_y\right\rVert^2\\&\leq\epsilon^2 \sigma_{\max}^2\left(P_{T|X}P_{S|X}^{-1}\right)\left\lVert J_y\right\rVert^2\\&\stackrel{(a)}\leq\epsilon^2\max_{s\in{\mathcal{S}}}P_S(s)\sigma^2_{\text{max}}(P_{T|X}P_{S|X}^{-1})\\&<\min_{t\in\mathcal{T}} P_T^2(t),
\end{align*}
which implies $$|\epsilon\frac{P_{S|T}^{-1}J_y(t)}{P_T(t)}\!|<1.$$ The step (a) follows from $\|J_y\|^2\leq\max_{s\in{\mathcal{S}}}P_S(s)$. \\
\end{proof}
In the next result, we present an approximation on \eqref{problem}. Before stating the next result, let us define \begin{align*}L_y&\triangleq[\sqrt{P_S}^{-1}]J_y\in\mathbb{R}^{|\mathcal{Y}|},\\ W^{T;Y}&\triangleq[\sqrt{P_T}^{-1}]P_{T|X}P_{S|X}^{-1}[\sqrt{P_S}]\in \mathbb{R}^{|\mathcal{T}|\times |\mathcal{Y}|},\end{align*} and \begin{align*}W^{X;Y}\triangleq[\sqrt{P_X}^{-1}]P_{S|X}^{-1}[\sqrt{P_S}]\in \mathbb{R}^{|\mathcal{X}|\times |\mathcal{Y}|}.\end{align*} Furthermore, let $$c_1\triangleq \frac{\min_{t\in\mathcal{T}}P_T(t)}{|\sigma_{\text{max}}(P_{T|X}P_{S|X}^{-1})|\sqrt{\max_{s\in{\mathcal{X}}}P_S(s)}},$$ and $$c_2\triangleq \frac{|\sigma_{\text{min}}(P_{S|X})|\min_{x\in\mathcal{X}}P_X(x)}{\sqrt{\max_{s\in{\mathcal{S}}}P_S(s)}}.$$
\begin{theorem}\label{th1}
    For all $\epsilon<\min\{c_1,c_2\}$, \eqref{problem} can be approximated by the following problem
    \begin{align}\label{approx}
\eqref{problem}\simeq P_2\triangleq\!\!\!\!\!\!\!\!\!\!\!\!\!\!\!\!\max_{\begin{array}{c} 
	\substack{P_{y},L_y:L_y\perp \sqrt{P_S},\\\sum_y P_yL_y=0,\\ \|L_y\|^2\leq 1,\\ \frac{1}{2}\epsilon^2\sum_y P_y\|W^{X;Y}L_y\|^2\leq r}
	\end{array}}\!\!\!\!\!\!\!\frac{1}{2}\epsilon^2\sum_y P_y\|W^{T;Y}L_y\|^2. 
    \end{align}
\end{theorem}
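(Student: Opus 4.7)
The plan is to translate the original problem into the variables $(P_y, L_y)$ using the perturbation parametrization introduced before the lemma. First I would write $P_{S|Y=y} = P_S + \epsilon J_y$ for every $y$; the three defining properties \eqref{0}--\eqref{1} of the perturbation vectors $J_y$ are equivalent to the $\chi^2$ fairness constraint together with the stochasticity of $P_{S|Y}$ and the consistency condition $\sum_y P_Y(y) P_{S|Y=y} = P_S$. Substituting $L_y = [\sqrt{P_S}^{-1}] J_y$, these three properties become exactly the first three constraints of $P_2$: orthogonality $L_y \perp \sqrt{P_S}$ (from \eqref{0}), the averaging condition $\sum_y P_y L_y = 0$ (from \eqref{sum}), and the norm bound $\|L_y\|^2 \leq 1$ (from \eqref{1}).

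Second, I would apply Lemma \ref{lem1} to rewrite the mutual informations appearing in the objective and in the rate constraint. The identities $[\sqrt{P_X}^{-1}] P_{S|X}^{-1} J_y = W^{X;Y} L_y$ and $[\sqrt{P_T}^{-1}] P_{T|X} P_{S|X}^{-1} J_y = W^{T;Y} L_y$ are immediate from the definitions of $W^{X;Y}$ and $W^{T;Y}$ together with $J_y = [\sqrt{P_S}] L_y$. Lemma \ref{lem1} therefore yields
\begin{align*}
I(X;Y) &= \tfrac{1}{2}\epsilon^2 \sum_y P_y \|W^{X;Y} L_y\|^2 + o(\epsilon^2), \\
I(T;Y) &= \tfrac{1}{2}\epsilon^2 \sum_y P_y \|W^{T;Y} L_y\|^2 + o(\epsilon^2),
\end{align*}
simultaneously valid whenever $\epsilon < \min\{c_1, c_2\}$, which is precisely the hypothesis of the theorem.

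Third, I would verify that the change of variables is consistent with the Markov chains $S-X-Y$ and $T-X-Y$. Given $(P_y, J_y)$ satisfying the listed constraints, define $P_{X|Y=y} = P_X + \epsilon P_{S|X}^{-1} J_y$ (well-defined since $P_{S|X}$ is invertible) and recover $P_{Y|X}$ via Bayes' rule. One then has to argue that $P_{X|Y=y}$ is a valid probability vector for small enough $\epsilon$; this is exactly the positivity check already performed inside the proof of Lemma \ref{lem1} under $\epsilon < c_2$. The two Markov chains then hold automatically, since by construction $P_{S|Y=y} = P_{S|X} P_{X|Y=y}$ and $P_{T|Y=y} = P_{T|X} P_{X|Y=y}$.

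The main obstacle, and the one that deserves careful handling, is the order of the error terms: replacing the two mutual informations by their quadratic approximations introduces $o(\epsilon^2)$ discrepancies both in the objective and in the compression constraint, so the feasible sets of \eqref{problem} and $P_2$ differ slightly. The conclusion of the theorem, written with $\simeq$ rather than equality, is to be understood in the sense that these $o(\epsilon^2)$ perturbations vanish in the small-$\epsilon$ regime, so $P_2$ captures the leading-order value of \eqref{problem}; any sequence of feasible points in one problem can be perturbed by an $o(\epsilon^2)$ amount to become feasible in the other, preserving the supremum to the same order.
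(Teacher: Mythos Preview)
Your proposal is correct and follows the same route as the paper's own proof, which simply invokes Lemma~\ref{lem1} for the two mutual-information approximations and traces the constraints $L_y\perp\sqrt{P_S}$, $\sum_y P_y L_y=0$, $\|L_y\|^2\le 1$ back to \eqref{0}--\eqref{1}. If anything, you are more careful than the paper: the paper does not spell out the reconstruction of $P_{X|Y=y}$ to check the Markov chains, nor does it discuss the $o(\epsilon^2)$ discrepancy in the compression constraint (it handles the latter only afterwards, in Remark~5).
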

\textbf{Proof:}
	The proof follows by Lemma \ref{lem1}. The constraints $L_y\perp \sqrt{P_S}$, $\sum_y P_yL_y=0$, and $\|L_y\|^2\leq 1$ are followed by \eqref{0}, \eqref{sum}, and \eqref{1}, for more detail see \cite{khodam}.
\begin{remark}
	As $\epsilon$ decreases, the approximation becomes tighter. This follows since using Taylor expansion, as $\epsilon$ decreases the error term becomes smaller.
\end{remark}
\begin{remark}
The compression-rate constraint $$I(X;Y)\le r$$ can be rewritten as
$$
\frac{1}{2}\epsilon^{2}\sum_{y} P_{y}\,\|W^{X;Y} L_{y}\|^{2}\le r - o(\epsilon^{2}).
$$
Since the compression threshold $r$ is typically much larger than the fairness constraint parameter $\epsilon$, we neglect the $o(\epsilon^{2})$ term relative to $r$. We can also derive an upper bound on the error term and use it here; we leave this to future work.
\end{remark}
Next, we present key properties of the matrices $W^{X;Y}$ and $W^{T;Y}$ that we use to derive lower bounds on \eqref{approx}.
\begin{proposition}\label{pos3}
    Each of the matrices $W^{T;Y}$ and $W^{X;Y}$ has singular value $1$, with corresponding singular vector $\sqrt{P_S}$.
\end{proposition}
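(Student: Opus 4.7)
The plan is a direct linear-algebraic verification: I would show that $\sqrt{P_S}$ is an eigenvector of $(W^{X;Y})^{T} W^{X;Y}$ with eigenvalue $1$, and similarly for $W^{T;Y}$. Equivalently, I would exhibit unit vectors $u$ and $v = \sqrt{P_S}$ satisfying $Wv = u$ and $W^{T} u = v$, which is the defining property of a singular-value/singular-vector pair with $\sigma = 1$. The corresponding left singular vectors will turn out to be $\sqrt{P_X}$ and $\sqrt{P_T}$, respectively.

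The forward direction is a three-step unwinding. First, by definition of the diagonal matrix, $[\sqrt{P_S}]\sqrt{P_S} = P_S$. Second, marginalization gives $P_{S|X} P_X = P_S$, hence $P_{S|X}^{-1} P_S = P_X$. Third, $[\sqrt{P_X}^{-1}] P_X = \sqrt{P_X}$. Chaining these steps yields $W^{X;Y}\sqrt{P_S} = \sqrt{P_X}$. Inserting the additional factor $P_{T|X}$ and using $P_{T|X} P_X = P_T$ together with $[\sqrt{P_T}^{-1}] P_T = \sqrt{P_T}$ gives $W^{T;Y}\sqrt{P_S} = \sqrt{P_T}$. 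Both target vectors are unit vectors in the $\ell_2$ sense, since $\|\sqrt{P_X}\|^2 = \sum_x P_X(x) = 1$ and similarly for $P_T$ and $P_S$.

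For the adjoint direction I would use that $P_{S|X}$ and $P_{T|X}$ are column-stochastic, so $P_{S|X}^{T}\mathbf{1} = \mathbf{1}$ and $P_{T|X}^{T}\mathbf{1} = \mathbf{1}$; transposing the first of these gives $(P_{S|X}^{-1})^{T}\mathbf{1} = \mathbf{1}$. Then
\begin{align*}
(W^{X;Y})^{T}\sqrt{P_X} &= [\sqrt{P_S}]\,(P_{S|X}^{-1})^{T}[\sqrt{P_X}^{-1}]\sqrt{P_X} = [\sqrt{P_S}]\,(P_{S|X}^{-1})^{T}\mathbf{1} = [\sqrt{P_S}]\mathbf{1} = \sqrt{P_S},
\end{align*}
and analogously $(W^{T;Y})^{T}\sqrt{P_T} = [\sqrt{P_S}]\,(P_{S|X}^{-1})^{T}(P_{T|X})^{T}[\sqrt{P_T}^{-1}]\sqrt{P_T} = \sqrt{P_S}$. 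Combined with the forward step, this shows $\sqrt{P_S}$ is an eigenvector of both $(W^{X;Y})^{T}W^{X;Y}$ and $(W^{T;Y})^{T}W^{T;Y}$ with eigenvalue $1$, which is exactly the claim.

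I do not expect any genuine obstacle: the argument is a short chain of matrix-vector computations relying only on basic properties of stochastic matrices and marginalization. The one pitfall worth guarding against is the temptation to conclude that $1$ is a singular value from $\|W\sqrt{P_S}\| = 1$ alone—this only bounds $\sigma_{\max}$ from below. Verifying the adjoint action via $(P_{S|X}^{-1})^{T}\mathbf{1} = \mathbf{1}$ and $(P_{T|X})^{T}\mathbf{1} = \mathbf{1}$ is what actually pins down $\sigma = 1$ as a true singular value with right singular vector $\sqrt{P_S}$.
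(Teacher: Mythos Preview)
Your proposal is correct and follows essentially the same approach as the paper: both verify that $\sqrt{P_S}$ is an eigenvector of $W^T W$ with eigenvalue $1$ by chaining $[\sqrt{P_S}]\sqrt{P_S}=P_S$, $P_{S|X}^{-1}P_S=P_X$, and $(P_{S|X}^{-1})^T\mathbf{1}=\mathbf{1}$ (and the analogous identities with $P_{T|X}$). Your organization into a forward step $W\sqrt{P_S}=\sqrt{P_X}$ (resp.\ $\sqrt{P_T}$) and an adjoint step $W^T\sqrt{P_X}=\sqrt{P_S}$ additionally identifies the left singular vectors, which the paper leaves implicit, but the underlying computation is the same.
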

\begin{proof}
    We have
    \begin{align*}
	&(W^{X;Y})^TW^{X;Y}\sqrt{P_S}\\&=[\sqrt{P_S}](P_{S|X}^T)^{-1}[\sqrt{P_X}^{-1}][\sqrt{P_X}^{-1}]P_{S|X}^{-1}[\sqrt{P_S}]\sqrt{P_S}\\
	&=[\sqrt{P_S}](P_{S|X}^T)^{-1}[\sqrt{P_X}^{-1}][\sqrt{P_X}^{-1}]P_X\\
	&=[\sqrt{P_S}](P_{S|X}^T)^{-1}\bm{1}=[\sqrt{P_S}]\bm{1}\\&=\sqrt{P_S},
	\end{align*}
    where $\bm{1}$ denotes a vector in which all elements are equal to $1$.
    Furthermore, we have
    \begin{align*}
        &(W^{T;Y})^TW^{T;Y}\sqrt{P_S}\\
	&=[\sqrt{P_S}](P_{T|X}P_{S|X}^{-1})^{T}[\sqrt{P_T}^{-1}][\sqrt{P_T}^{-1}]P_T\\
	&=[\sqrt{P_S}](P_{T|X}P_{S|X}^{-1})^{T}\bm{1}=[\sqrt{P_S}]\bm{1}\\&=\sqrt{P_S}.
    \end{align*}
\end{proof}
Next, we find lower bounds on \eqref{approx} and discuss their tightness. To this end, let $\sigma_{\max}^{T;Y}$ and $\sigma_{\max_2}^{T;Y}$ denote the largest and second-largest singular values of the matrix $W^{T;Y}$, with corresponding singular vectors $L_{\sigma}^{W^{T;Y}}$ and $L_{\sigma_2}^{W^{T;Y}}$, respectively, where $\|L_{\sigma}^{W^{T;Y}}\|=1$ and $\|L_{\sigma_2}^{W^{T;Y}}\|=1$.
\begin{theorem}\label{th2}
	If $\sigma_{\max}^{T;Y}>1$, we have
	\begin{align}
	P_2\geq \frac{1}{2}\epsilon^2\left(\frac{\sigma_{\max}^{T;Y}}{K}\right)^2,
	\end{align}
	where $1\leq K$ is the smallest constant that satisfies 
	$$\frac{1}{2}\epsilon^2\|W^{X;Y}L_{\sigma}^{W^{T;Y}}\|^2\leq rK^2.$$ Moreover, when $\frac{1}{2}\epsilon^2\|W^{X;Y}L_{\sigma}^{W^{T;Y}}\|^2\leq r$, then $K= 1$. 
	If $\sigma_{\max}^{T;Y}=1$, then 
	\begin{align}\label{y}
	P_2\geq \frac{1}{2}\epsilon^2\left(\frac{\sigma_{\max_2}^{T;Y}}{K}\right)^2,
	\end{align}
    where $1\leq K$ is the smallest constant that satisfies 
	$$\frac{1}{2}\epsilon^2\|W^{X;Y}L_{\sigma_2}^{W^{T;Y}}\|^2\leq rK^2.$$
	Finally, when $|\mathcal{S}|=2$, if $\sigma_{\max}^{T;Y}>1$, then we have $$P_2= \frac{1}{2}\epsilon^2\left(\frac{\sigma_{\max}^{T;Y}}{K}\right)^2,$$ otherwise, $$P_2= \frac{1}{2}\epsilon^2\left(\frac{\sigma_{\max_2}^{T;Y}}{K}\right)^2.$$
\end{theorem}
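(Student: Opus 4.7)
The plan is to give a constructive lower bound by exhibiting a feasible pair $(\{P_y\},\{L_y\})$ for $P_2$ that attains each claimed value, and then, for $|\mathcal{S}|=2$, to argue that every feasible perturbation is forced into this form. The guiding observation used throughout is that by Proposition \ref{pos3} both $W^{T;Y}$ and $W^{X;Y}$ admit $\sqrt{P_S}$ as a right singular vector with singular value $1$; hence any right singular vector of $W^{T;Y}$ associated with a singular value different from $1$ is automatically orthogonal to $\sqrt{P_S}$ by orthogonality of singular vectors across distinct singular subspaces, which is exactly the feasibility constraint $L_y\perp\sqrt{P_S}$ imposed in $P_2$.

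For the case $\sigma_{\max}^{T;Y}>1$, I would take the binary alphabet $|\mathcal{Y}|=2$ with uniform weights $P_1=P_2=1/2$ and the symmetric choice $L_1=(1/K)\,L_\sigma^{W^{T;Y}}$, $L_2=-(1/K)\,L_\sigma^{W^{T;Y}}$. Feasibility is then a short check: orthogonality to $\sqrt{P_S}$ holds by the observation above (since $\sigma_{\max}^{T;Y}\neq 1$); the zero-mean condition $\sum_y P_yL_y=0$ holds by symmetry; $\|L_y\|^2=1/K^2\leq 1$ since $K\geq 1$; and the compression constraint $\tfrac{1}{2}\epsilon^2\sum_yP_y\|W^{X;Y}L_y\|^2=\tfrac{\epsilon^2}{2K^2}\|W^{X;Y}L_\sigma^{W^{T;Y}}\|^2\leq r$ follows directly from the defining property of $K$. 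Evaluating the objective at this point yields $\tfrac{1}{2}\epsilon^2(\sigma_{\max}^{T;Y}/K)^2$, with the subcase $K=1$ corresponding to the compression bound being slack.

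For $\sigma_{\max}^{T;Y}=1$, the top singular vector can be chosen as $\sqrt{P_S}$ itself, which is ruled out by the orthogonality constraint, so one descends to $L_{\sigma_2}^{W^{T;Y}}$. The only subtlety here, and the step I expect to require the most care, is ensuring that $L_{\sigma_2}^{W^{T;Y}}$ can be taken orthogonal to $\sqrt{P_S}$: if $\sigma_{\max_2}^{T;Y}<1$, this is automatic by SVD orthogonality across distinct singular values; if $\sigma_{\max_2}^{T;Y}=1$, the singular subspace associated with the singular value $1$ is at least two-dimensional and contains $\sqrt{P_S}$, so one can select an orthonormal basis inside it with a vector orthogonal to $\sqrt{P_S}$ and call that $L_{\sigma_2}^{W^{T;Y}}$. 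With this in hand, the identical binary construction, using $L_{\sigma_2}^{W^{T;Y}}$ in place of $L_\sigma^{W^{T;Y}}$ and $K$ redefined accordingly, delivers \eqref{y}.

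Finally, for the equality claim when $|\mathcal{S}|=2$, the constraint $L_y\perp\sqrt{P_S}$ confines every $L_y$ to the one-dimensional subspace spanned by a unit vector $v^\star$, which must coincide, up to sign, with $L_\sigma^{W^{T;Y}}$ when $\sigma_{\max}^{T;Y}>1$ and with $L_{\sigma_2}^{W^{T;Y}}$ otherwise. Writing $L_y=\beta_yv^\star$, feasibility reduces to $|\beta_y|\leq 1$, $\sum_yP_y\beta_y=0$, and $\tfrac{1}{2}\epsilon^2\|W^{X;Y}v^\star\|^2\,E\leq r$, where $E:=\sum_yP_y\beta_y^2$, while the objective becomes $\tfrac{1}{2}\epsilon^2E\sigma^2$ with $\sigma$ the relevant singular value. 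Since $E\leq\max_y\beta_y^2\leq 1$ and $E\leq 2r/(\epsilon^2\|W^{X;Y}v^\star\|^2)$, the largest feasible value of $E$ is exactly $1/K^2$, attained by the binary construction above. Hence $P_2=\tfrac{1}{2}\epsilon^2(\sigma/K)^2$, matching the lower bound and establishing equality.
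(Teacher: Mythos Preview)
Your proposal is correct and follows essentially the same approach as the paper: the same uniform binary construction with $L_1=-L_2=L_\sigma^{W^{T;Y}}/K$ for the lower bounds, and the same observation that in the $|\mathcal{S}|=2$ case the constraint $L_y\perp\sqrt{P_S}$ forces every $L_y$ into a one-dimensional subspace. You in fact supply more detail than the paper does, particularly in handling the possible degeneracy when $\sigma_{\max}^{T;Y}=1$ and in the scalar optimization over $E=\sum_y P_y\beta_y^2$ that pins down equality for $|\mathcal{S}|=2$.
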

\begin{proof}
To derive the first lower bound, let $\sigma_{\max}^{T;Y}>1$. Using Proposition \ref{pos3}, we have $L_{\sigma}^{W^{T;Y}}\perp \sqrt{P_S}$ since $\sqrt{P_S}$ is a singular vector of $W^{T;Y}$. Let $Y$ be a uniform binary RV. Furthermore, let $L_1=\frac{L_{\sigma}^{W^{T;Y}}}{K}$ and $L_2=\frac{-L_{\sigma}^{W^{T;Y}}}{K}$. Clearly, $L_1\perp \sqrt{P_S}$ and $L_2\perp \sqrt{P_S}$ and $P_Y(1)L_1+P_Y(2)L_2=0$. Moreover, we have
\begin{align*}
\frac{1}{2}\epsilon^2\!\!\left(\sum_y \!\!P_Y\|W^{T;Y}L_y\|^2\right)=\frac{1}{2}\epsilon^2\left(\frac{\sigma_{\max}^{T;Y}}{K}\right)^2. 
\end{align*}
The proof of \eqref{y} is similar. To prove the last argument, note that when $|\mathcal{S}|=2$, the matrix $W^{T;Y}$ has two singular vectors, one of which is $\sqrt{P_S}$. Hence, the only feasible direction is the other singular vector which completes the proof.
\end{proof}
\begin{remark}
    In Theorem 2, $K$ is a scaling factor that ensures $L_{\sigma}^{W^{T;Y}}$ is feasible under the compression rate constraint. In other words, if $L_{\sigma}^{W^{T;Y}}$ does not satisfy the rate constraint, we scale it down. We note that if $\tfrac{1}{2}\epsilon^2\|L_{\sigma}^{W^{T;Y}}\|^2 \leq r$, we cannot scale $L_{\sigma}^{W^{T;Y}}$ up, since $\|L_{\sigma}^{W^{T;Y}}\| = 1$. 
\end{remark}
\begin{corollary}
    Let $\epsilon$ be fixed. Then, by using Remark 6, when $r<\tfrac{1}{2}\epsilon^2\|L_{\sigma}^{W^{T;Y}}\|^2 $, we get
$$K=\sqrt{\frac{\frac{1}{2}\epsilon^2\|W^{X;Y}L_{\sigma}^{W^{T;Y}}\|^2}{r}}> 1,$$
    which results
    \begin{align}\label{q}
    P_2\geq r\left(\frac{\sigma_{\max}^{T;Y}}{\|W^{X;Y}L_{\sigma}^{W^{T;Y}}\|^2}\right)^2.
    \end{align}
    When $|\mathcal{S}|=2$, \eqref{q} is tight.
    In other words, when $r$ is smaller than a threshold, the lower bound on $P_2$ increases linearly with respect to $r$.
\end{corollary}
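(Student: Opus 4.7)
The proof should follow as a direct algebraic consequence of Theorem~\ref{th2} together with Remark~6. I would begin by invoking Theorem~\ref{th2} in the regime $\sigma_{\max}^{T;Y}>1$, which yields $P_2 \geq \tfrac{1}{2}\epsilon^2 (\sigma_{\max}^{T;Y}/K)^2$, where $K\geq 1$ is the smallest scalar satisfying $\tfrac{1}{2}\epsilon^2 \|W^{X;Y}L_{\sigma}^{W^{T;Y}}\|^2 \leq r K^2$. Since the map $K\mapsto 1/K^2$ is strictly decreasing, picking the smallest admissible $K$ indeed maximizes the lower bound, so the entire statement reduces to computing this $K$ explicitly and substituting back.

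The next step is to make $K$ explicit. Rearranging the compression constraint gives $K^2 \geq \tfrac{1}{2}\epsilon^2 \|W^{X;Y}L_{\sigma}^{W^{T;Y}}\|^2 / r$, which combined with $K\geq 1$ yields
\[
K \;=\; \max\!\bigl\{\,1,\ \sqrt{\tfrac{1}{2}\epsilon^2 \|W^{X;Y}L_{\sigma}^{W^{T;Y}}\|^2/r}\,\bigr\}.
\]
The hypothesis of the corollary (using $\|L_{\sigma}^{W^{T;Y}}\|=1$ as recalled before Theorem~\ref{th2}) is exactly the regime in which the radical strictly exceeds $1$ and is therefore the active branch of the max; hence $K = \sqrt{\tfrac{1}{2}\epsilon^2 \|W^{X;Y}L_{\sigma}^{W^{T;Y}}\|^2/r}>1$, as stated. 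Substituting this $K$ into $\tfrac{1}{2}\epsilon^2(\sigma_{\max}^{T;Y})^2/K^2$ cancels the $\tfrac{1}{2}\epsilon^2$ prefactor and leaves a bound linear in $r$, giving \eqref{q}.

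For the tightness claim when $|\mathcal{S}|=2$, I would appeal directly to the last assertion of Theorem~\ref{th2}. Proposition~\ref{pos3} shows that $\sqrt{P_S}$ is always a singular direction of $W^{T;Y}$, so when $|\mathcal{S}|=2$ the only other singular direction is $L_{\sigma}^{W^{T;Y}}$. This direction is therefore the unique feasible choice (up to sign and the scaling $1/K$), the construction used to prove the lower bound in Theorem~\ref{th2} is forced, and the inequality collapses to equality. There is essentially no obstacle in this proof; the corollary is a bookkeeping rewriting of Theorem~\ref{th2} in the small-$r$ regime, and the only thing worth being careful about is identifying correctly which branch of the max defining $K$ is active under the stated hypothesis.
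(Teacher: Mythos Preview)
Your proposal is correct and matches the paper's approach: the paper offers no separate proof for the corollary, treating it as an immediate algebraic consequence of Theorem~\ref{th2} and Remark~6, which is exactly what you do. Your explicit computation of $K$ and the substitution into the bound, together with the appeal to the $|\mathcal{S}|=2$ clause of Theorem~\ref{th2} for tightness, are precisely the intended steps.
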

After finding $L_u$ and $P_U$ that attain the lower bounds in Theorem \ref{th2}, we can find the joint distribution $P_{STXY}$ as
\begin{align*}
P_{S|Y=0}&=P_S+\epsilon[\sqrt{P_S}]L_1,\\ 
P_{S|Y=1}&=P_S+\epsilon[\sqrt{P_S}]L_2,\\
P_{T|Y=0}&=P_T+\epsilon P_{T|X}P_{S|X}^{-1}[\sqrt{P_S}]L_1,\\
P_{T|Y=1}&=P_T+\epsilon P_{T|X}P_{S|X}^{-1}[\sqrt{P_S}]L_2,\\
P_{X|Y=0}&=P_X+\epsilon P_{S|X}^{-1}[\sqrt{P_S}]L_1,\\
P_{X|Y=1}&=P_X+\epsilon P_{S|X}^{-1}[\sqrt{P_S}]L_2,
\end{align*}
where $L_1$, $L_2$ and marginal distribution of $Y$ are obtained in Theorem \ref{th2}. Finally, $$P_{STXY}(x,y,,s,t)=P_{XST}(x,s,t)P_{Y|X}(y|x),$$ where $P_{Y|X}$ can be calculated by using $P_{X|Y}$ and $P_Y$ obtained in Theorem \ref{th2}.
\begin{figure}[]
	\centering
	\includegraphics[width = 0.7\textwidth]{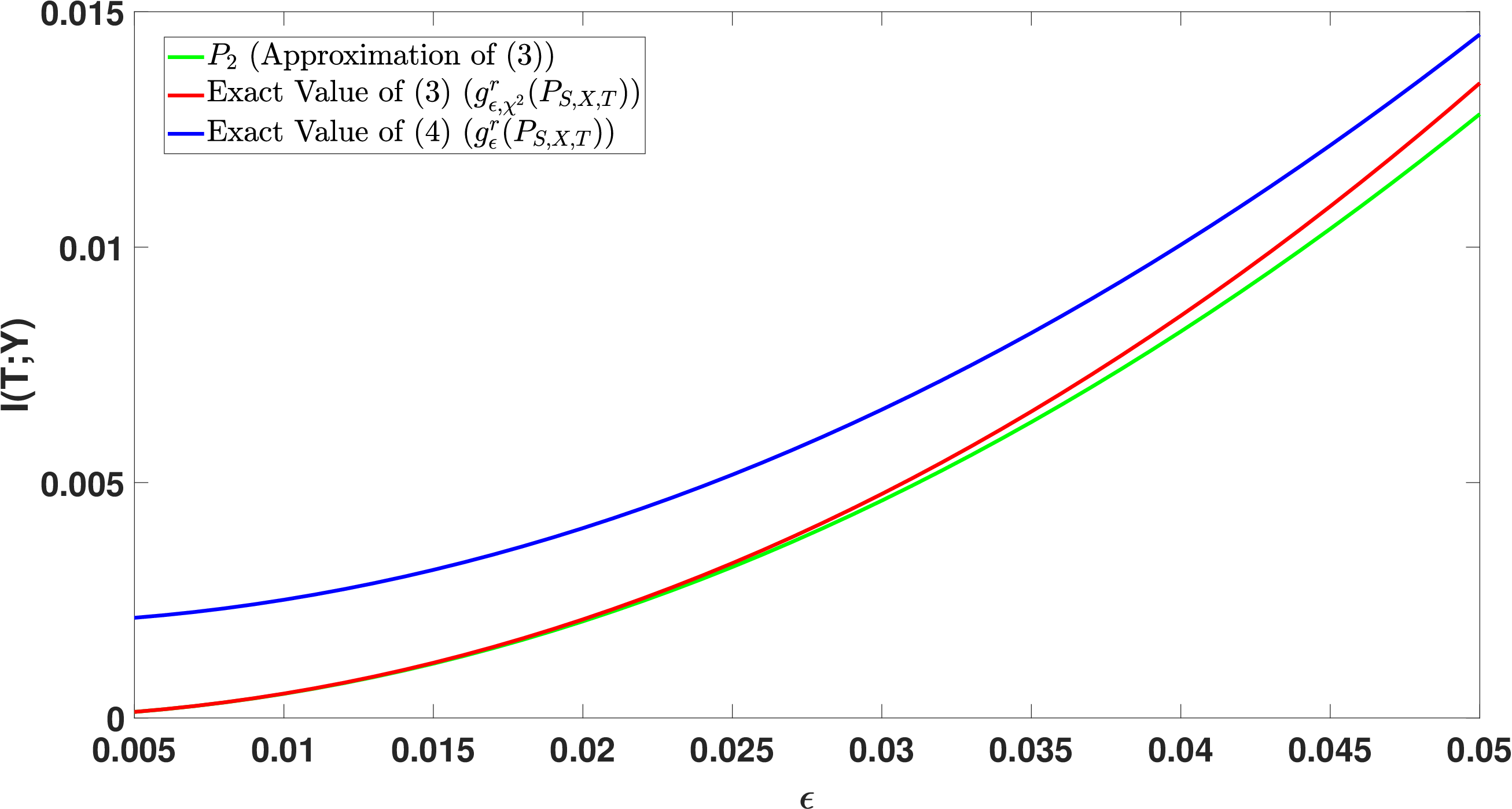}
	\caption{Comparing the proposed approach in this paper with the optimal solutions of \eqref{problem} and \eqref{prob2}. We can see that in the high privacy regimes, $P_2$ is close to the exact solution of \eqref{problem} which is found by exhaustive search and $g^{r}_{\epsilon}(P_{S,X,T})$ dominates $g^{r}_{\epsilon,\chi^2}(P_{S,X,T})$.}
	\label{geo4}
\end{figure}
\section{Numerical Example:}
In this section, we present a numerical example to evaluate the proposed approach and compare the results with the optimal solutions of \eqref{problem} and \eqref{prob2}.\\
\textbf{Example:}
	Let $P_{T|X}=\begin{bmatrix}
	\frac{1}{4}       & \frac{2}{5}  \\
	\frac{3}{4}    & \frac{3}{5}
	\end{bmatrix},$
	and $P_X$ be given as $[\frac{1}{4} , \frac{3}{4}]^T$. Furthermore, let $P_{S|X}=\begin{bmatrix}
	0.275       & 0.32  \\
	0.725    & 0.68
	\end{bmatrix}$. Thus, we find $P_T=P_{T|X}P_X=[0.3625, 0.6375]^T$ and $P_S=P_{S|X}P_X=[0.3088, 0.6913]^T$. Furthermore, we find $$W^{T;Y}=\begin{bmatrix}
	2.4610    & -0.9206  \\
	-1.1599    &  1.7355 
	\end{bmatrix},$$ and $$W^{X;Y}=\begin{bmatrix}
	-16.7931    & 11.8246  \\
	-10.3371    &  -5.8669 
	\end{bmatrix}.$$
	The singular values of $W^{T;Y}$ are $3.2034$ and $1$ with corresponding right singular vectors $$[-0.8314, 0.5557]^T,$$ and $$[0.5557 , 0.8314]^T,$$ respectively.  Moreover, the
    singular values of $W^{X;Y}$ are $23.7087$ and $1$ with corresponding right singular vectors $$[0.8314, -0.5557]^T,$$ and $$[0.5557 , 0.8314]^T,$$ respectively. Here, we let $\epsilon\in [0.005, 0.05]$ and $r=0.75$. We have $$\|W^{X;Y}L_{\sigma}^{W^{T;Y}}\|^2=562.1029,$$ and we can check that for $\epsilon\in [0.005, 0.05]$ the compression rate constraint is satisfied, since for $\epsilon=0.05$, we have $$\frac{1}{2}\epsilon^2\|W^{X;Y}L_{\sigma}^{W^{T;Y}}\|^2= 0.7026.$$
     Here, $L_1$ equals to $[-0.8314, 0.5557]^T$ and the compression rate constraint is satisfied for $L_1$ which yields $K=1$. Using Theorem 2, we have $$P_2=\frac{1}{2}\epsilon^2(3.2034)^2,$$ and the lower bound is tight since $|\mathcal{S}|=2$.
In Fig. \ref{geo4}, we compare $P_2$ and the optimal solutions of \eqref{problem} and \eqref{prob2} using exhaustive search. We recall that $P_2$ is the approximate of \eqref{problem} and here achieves the lower bound in Theorem 2. We can see that $g^{r}_{\epsilon}(P_{S,X,T})$ dominates $g^{r}_{\epsilon,\chi^2}(P_{S,X,T})$, and the gap between exact values of \eqref{problem} and \eqref{prob2} is decreasing as $\epsilon$ increases. Finally, the gap between $P_2$ and $g^{r}_{\epsilon,\chi^2}(P_{S,X,T})$ is small in the high privacy regimes. Intuitively, the blue curve dominates \eqref{problem}, since in \eqref{problem} we impose the point-wise $\chi^2$-criterion instead of bounded mutual information. Furthermore, when $r \geq H(X) = 0.8113$, we can use the linear program from \cite{gun} to obtain the optimal solution to \eqref{prob2}. Finally, let us fix $\epsilon=0.05$. Then, for $r\leq 0.7026$, we can use \eqref{q}.
\section{conclusion}\label{concul}
We have shown that concepts from information geometry can be used to simplify an information-theoretic fair mechanism design problem under the bounded point-wise statistical parity. When a small $\epsilon$ is allowed, simple approximate solutions are derived. Specifically, we look for vectors satisfying the fairness constraints of having the largest Euclidean norm, leading to finding the largest principle singular value and vector of a matrix. The proposed approach establishes a useful and general design framework, which has been used in other privacy design problems such as considering point-wise $\chi^2$ and $\ell_1$-criterion in the literature. 
\clearpage   
\bibliographystyle{IEEEtran}
{\balance \bibliography{IEEEabrv,IZS2026}}

\begin{thebibliography}{10}
\providecommand{\url}[1]{#1}
\csname url@samestyle\endcsname
\providecommand{\newblock}{\relax}
\providecommand{\bibinfo}[2]{#2}
\providecommand{\BIBentrySTDinterwordspacing}{\spaceskip=0pt\relax}
\providecommand{\BIBentryALTinterwordstretchfactor}{4}
\providecommand{\BIBentryALTinterwordspacing}{\spaceskip=\fontdimen2\font plus
\BIBentryALTinterwordstretchfactor\fontdimen3\font minus
  \fontdimen4\font\relax}
\providecommand{\BIBforeignlanguage}[2]{{%
\expandafter\ifx\csname l@#1\endcsname\relax
\typeout{** WARNING: IEEEtran.bst: No hyphenation pattern has been}%
\typeout{** loaded for the language `#1'. Using the pattern for}%
\typeout{** the default language instead.}%
\else
\language=\csname l@#1\endcsname
\fi
#2}}
\providecommand{\BIBdecl}{\relax}
\BIBdecl

\bibitem{AmirITW2024}
A.~Zamani, B.~Rodr{\'\i}guez-G{\'a}lvez, and M.~Skoglund, ``On information
  theoretic fairness: Compressed representations with perfect demographic
  parity,'' in \emph{2024 IEEE Information Theory Workshop (ITW)}, 2024, pp.
  25--30.

\bibitem{vari}
B.~Rodr{\'\i}guez-G{\'a}lvez, R.~Thobaben, and M.~Skoglund, ``A variational
  approach to privacy and fairness,'' in \emph{2021 IEEE Information Theory
  Workshop (ITW)}.\hskip 1em plus 0.5em minus 0.4em\relax IEEE, 2021, pp. 1--6.

\bibitem{gun}
S.~{Sreekumar} and D.~{G\"{u}nd\"{u}z}, ``Optimal privacy-utility trade-off
  under a rate constraint,'' in \emph{2019 IEEE International Symposium on
  Information Theory}, July 2019, pp. 2159--2163.

\bibitem{zhao2022}
H.~Zhao and G.~J. Gordon, ``Inherent tradeoffs in learning fair
  representations,'' \emph{Journal of Machine Learning Research}, vol.~23,
  no.~57, pp. 1--26, 2022.

\bibitem{zhao2019}
H.~Zhao, A.~Coston, T.~Adel, and G.~J. Gordon, ``Conditional learning of fair
  representations,'' in \emph{International Conference on Learning
  Representations}, 2019.

\bibitem{zemel}
R.~Zemel, Y.~Wu, K.~Swersky, T.~Pitassi, and C.~Dwork, ``Learning fair
  representations,'' in \emph{International conference on machine
  learning}.\hskip 1em plus 0.5em minus 0.4em\relax PMLR, 2013, pp. 325--333.

\bibitem{gronowski2023classification}
A.~Gronowski, W.~Paul, F.~Alajaji, B.~Gharesifard, and P.~Burlina,
  ``Classification utility, fairness, and compactness via tunable information
  bottleneck and r{\'e}nyi measures,'' \emph{IEEE Transactions on Information
  Forensics and Security}, 2023.

\bibitem{hardt}
M.~Hardt, E.~Price, and N.~Srebro, ``Equality of opportunity in supervised
  learning,'' \emph{Advances in neural information processing systems},
  vol.~29, 2016.

\bibitem{king3}
A.~Zamani, T.~J. Oechtering, and M.~Skoglund, ``On the privacy-utility
  trade-off with and without direct access to the private data,'' \emph{IEEE
  Transactions on Information Theory}, vol.~70, no.~3, pp. 2177--2200, 2024.

\bibitem{borz}
B.~{Rassouli} and D.~{G\"{u}nd\"{u}z}, ``On perfect privacy,'' \emph{IEEE
  Journal on Selected Areas in Information Theory}, vol.~2, no.~1, pp.
  177--191, 2021.

\bibitem{khodam}
A.~Zamani, T.~J. Oechtering, and M.~Skoglund, ``A design framework for strongly
  $\chi^2$-private data disclosure,'' \emph{IEEE Transactions on Information
  Forensics and Security}, vol.~16, pp. 2312--2325, 2021.

\bibitem{Khodam22}
{A. Zamani, T. J. Oechtering, and M. Skoglund}, ``Data disclosure with non-zero
  leakage and non-invertible leakage matrix,'' \emph{IEEE Transactions on
  Information Forensics and Security}, vol.~17, pp. 165--179, 2022.

\bibitem{Yanina1}
Y.~Y. Shkel, R.~S. Blum, and H.~V. Poor, ``Secrecy by design with applications
  to privacy and compression,'' \emph{IEEE Transactions on Information Theory},
  vol.~67, no.~2, pp. 824--843, 2021.

\bibitem{makhdoumi}
A.~Makhdoumi, S.~Salamatian, N.~Fawaz, and M.~M{\'e}dard, ``From the
  information bottleneck to the privacy funnel,'' in \emph{2014 IEEE
  Information Theory Workshop}, 2014, pp. 501--505.

\bibitem{yamamoto1988rate}
H.~Yamamoto, ``A rate-distortion problem for a communication system with a
  secondary decoder to be hindered,'' \emph{IEEE Transactions on Information
  Theory}, vol.~34, no.~4, pp. 835--842, 1988.

\bibitem{sankar}
L.~Sankar, S.~R. Rajagopalan, and H.~V. Poor, ``Utility-privacy tradeoffs in
  databases: An information-theoretic approach,'' \emph{IEEE Transactions on
  Information Forensics and Security}, vol.~8, no.~6, pp. 838--852, 2013.

\bibitem{Calmon2}
H.~{Wang}, L.~{Vo}, F.~P. {Calmon}, M.~{M\'{e}dard}, K.~R. {Duffy}, and
  M.~{Varia}, ``Privacy with estimation guarantees,'' \emph{IEEE Transactions
  on Information Theory}, vol.~65, no.~12, pp. 8025--8042, Dec 2019.

\bibitem{fairAsoodeh}
H.~Ghoukasian and S.~Asoodeh, ``Differentially private fair binary
  classifications,'' in \emph{2024 IEEE International Symposium on Information
  Theory (ISIT)}, 2024, pp. 611--616.

\bibitem{9518124}
S.~Asoodeh, W.-N. Chen, F.~P. Calmon, and A.~Özgür, ``Differentially private
  federated learning: An information-theoretic perspective,'' in \emph{2021
  IEEE International Symposium on Information Theory (ISIT)}, 2021, pp.
  344--349.

\bibitem{isik2023exact}
B.~Isik, W.-N. Chen, A.~Özgür, T.~Weissman, and A.~No, ``Exact optimality of
  communication-privacy-utility tradeoffs in distributed mean estimation,''
  \emph{Advances in Neural Information Processing Systems}, vol.~36, pp.
  37\,761--37\,785, 2023.

\bibitem{liu2024universal}
Y.~Liu, W.-N. Chen, A.~Özgür, and C.~T. Li, ``Universal exact compression of
  differentially private mechanisms,'' \emph{Advances in Neural Information
  Processing Systems}, vol.~37, pp. 91\,492--91\,531, 2024.

\bibitem{11123370}
A.~Zamani, A.~Changizi, R.~Thobaben, and M.~Skoglund, ``Information-theoretic
  fairness with a bounded statistical parity constraint,'' in \emph{2025 23rd
  International Symposium on Modeling and Optimization in Mobile, Ad Hoc, and
  Wireless Networks (WiOpt)}, 2025, pp. 1--8.

\bibitem{shannon}
C.~E. Shannon, ``Communication theory of secrecy systems,'' \emph{The Bell
  system technical journal}, vol.~28, no.~4, pp. 656--715, 1949.

\bibitem{feldman2015certifying}
M.~Feldman, S.~A. Friedler, J.~Moeller, C.~Scheidegger, and
  S.~Venkatasubramanian, ``Certifying and removing disparate impact,'' in
  \emph{proceedings of the 21th ACM SIGKDD international conference on
  knowledge discovery and data mining}, 2015, pp. 259--268.

\bibitem{5360534}
T.~Calders, F.~Kamiran, and M.~Pechenizkiy, ``Building classifiers with
  independency constraints,'' in \emph{2009 IEEE International Conference on
  Data Mining Workshops}, 2009, pp. 13--18.

\bibitem{Shashi}
S.~Borade and L.~Zheng, ``Euclidean information theory,'' in \emph{2008 IEEE
  International Zurich Seminar on Communications}, 2008, pp. 14--17.

\bibitem{huang}
S.~L. Huang and L.~Zheng, ``Linear information coupling problems,'' in
  \emph{2012 IEEE International Symposium on Information Theory
  Proceedings}.\hskip 1em plus 0.5em minus 0.4em\relax IEEE, 2012, pp.
  1029--1033.

\bibitem{9965856}
A.~Zamani, T.~J. Oechtering, and M.~Skoglund, ``Bounds for privacy-utility
  trade-off with per-letter privacy constraints and non-zero leakage,'' in
  \emph{2022 IEEE Information Theory Workshop (ITW)}, 2022, pp. 13--18.

\bibitem{emma}
E.~M. Athanasakos, N.~Kalouptsidis, and H.~Manjunath, ``Local approximation of
  secrecy capacity,'' in \emph{2024 32nd European Signal Processing Conference
  (EUSIPCO)}, 2024, pp. 745--749.

\bibitem{8125176}
J.~Liao, L.~Sankar, V.~Y.~F. Tan, and F.~du~Pin~Calmon, ``Hypothesis testing
  under mutual information privacy constraints in the high privacy regime,''
  \emph{IEEE Transactions on Information Forensics and Security}, vol.~13,
  no.~4, pp. 1058--1071, 2018.

\bibitem{zamani2025information}
A.~Zamani, P.~Sadeghi, and M.~Skoglund, ``An information geometric approach to
  local information privacy with applications to max-lift and local
  differential privacy,'' \emph{arXiv preprint arXiv:2501.11757}, 2025.

\bibitem{edwards2016censoring}
H.~Edwards and A.~Storkey, ``Censoring representations with an adversary,'' in
  \emph{International Conference on Learning Representations (ICLR)}, 2016, pp.
  1--14.

\bibitem{creager2019flexibly}
E.~Creager, D.~Madras, J.-H. Jacobsen, M.~Weis, K.~Swersky, T.~Pitassi, and
  R.~Zemel, ``Flexibly fair representation learning by disentanglement,'' in
  \emph{International conference on machine learning}.\hskip 1em plus 0.5em
  minus 0.4em\relax PMLR, 2019, pp. 1436--1445.

\bibitem{louizos2015variational}
C.~Louizos, K.~Swersky, Y.~Li, M.~Welling, and R.~Zemel, ``The variational fair
  autoencoder,'' \emph{International Conference on Learning Representations
  (ICLR)}, 2016.

\bibitem{gupta2021controllable}
U.~Gupta, A.~M. Ferber, B.~Dilkina, and G.~Ver~Steeg, ``Controllable guarantees
  for fair outcomes via contrastive information estimation,'' in
  \emph{Proceedings of the AAAI Conference on Artificial Intelligence},
  vol.~35, no.~9, 2021, pp. 7610--7619.

\bibitem{de2022funck}
J.~M. de~Freitas and B.~C. Geiger, ``Funck: Information funnels and bottlenecks
  for invariant representation learning,'' \emph{arXiv preprint
  arXiv:2211.01446}, 2022.

\bibitem{9517766}
S.~Khodadadian, A.~Ghassami, and N.~Kiyavash, ``Impact of data processing on
  fairness in supervised learning,'' in \emph{2021 IEEE International Symposium
  on Information Theory (ISIT)}, 2021, pp. 2643--2648.

\bibitem{zamani2025variable}
A.~Zamani and M.~Skoglund, ``Variable-length coding with zero and non-zero
  privacy leakage,'' \emph{Entropy}, vol.~27, no.~2, p. 124, 2025.

\bibitem{yamamoto}
H.~Yamamoto, ``A source coding problem for sources with additional outputs to
  keep secret from the receiver or wiretappers (corresp.),'' \emph{IEEE
  Transactions on Information Theory}, vol.~29, no.~6, pp. 918--923, 1983.

\end{thebibliography}
\end{document}